\title{A tighter bound on the number of relevant variables in a bounded degree Boolean function}
\author{Jake Wellens}
\newcommand\be{\begin{equation}}
\newcommand\ee{\end{equation}}
\newcommand\bea{\begin{eqnarray}}
\newcommand\eea{\end{eqnarray}}
\newcommand\R{\mathbb{R}}
\newcommand\E{\mathbb{E}}
\newcommand\nn{\nonumber}
\newcommand{\poly}{\text{poly}}
\newcommand{\Inf}{\text{Inf}}
\newcommand{\bls}{\text{bs}}
\newtheorem{thm}{Theorem}
\newtheorem{lem}[thm]{Lemma}
\newtheorem{prop}[thm]{Proposition}
\newtheorem{cor}[thm]{Corollary}
\newtheorem{fact}[thm]{Fact}
\newcommand{\bs}[1]{\boldsymbol{#1}}
\begin{document}
\maketitle

\begin{abstract}
A classical theorem of Nisan and Szegedy says that a boolean function with degree $d$ as a real polynomial depends on at most $d2^{d-1}$ of its variables. In recent work by Chiarelli, Hatami and Saks, this upper bound was improved to $C \cdot 2^d$, where $C = 6.614$. Here we refine their argument to show that one may take $C = 4.416$. 
\end{abstract}

\section{Introduction}

Given a Boolean function $f: \{0,1\}^n \to \{0,1\}$, there is a unique multilinear polynomial in $\R[x_1, \dots, x_n]$ which agrees with $f$ on every input in $\{0,1\}^n$. One important feature of this polynomial is its degree, denoted $\deg(f)$, which is known to be polynomially related to many other complexity measures, such as block sensitivity $\bls(f)$, certificate complexity $C(f)$, decision tree depth $D(f)$, and approximate degree $\widetilde{\deg(f)}$ (see \cite{NS} and \cite{BdW}). One can also bound the number of relevant variables of $f$ (i.e. the variables which actually show up in a term with non-zero coefficient in the polynomial for $f$, also called the \emph{junta size} of $f$) entirely in terms of the degree: 

\begin{thm}[Nisan-Szegedy \cite{NS}]\label{NS} A function $f: \{0,1\}^n \to \{0,1\}$ with degree $d$ has at most $\frac{d}{2}\cdot 2^d$ relevant variables.
\end{thm}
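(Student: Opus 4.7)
The plan is to bound the number of relevant variables by squeezing the total influence $\Inf(f) = \sum_i \Pr_x[f(x) \neq f(x^{\oplus i})]$ from both sides. It is convenient to pass to the $\pm 1$-valued encoding $f \Bool$, which preserves both $\deg(f)$ and the set of relevant variables. I will establish an upper bound $\Inf(f) \le d$ together with a matching individual lower bound $\Inf_i(f) \ge 2^{-(d-1)}$ for every relevant variable $x_i$. These combine to give
\[
d \;\ge\; \Inf(f) \;=\; \sum_{i \text{ relevant}} \Inf_i(f) \;\ge\; R \cdot 2^{-(d-1)},
\]
where $R$ is the number of relevant variables, which rearranges to $R \le d \cdot 2^{d-1} = \tfrac{d}{2} \cdot 2^d$.

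The upper bound is the easy half: expanding $f$ in the Fourier basis, the standard identity $\Inf(f) = \sum_S |S|\, \hat{f}(S)^2$ together with the degree restriction $\hat{f}(S) = 0$ for $|S| > d$ and Parseval's identity $\sum_S \hat{f}(S)^2 = \|f\|_2^2 = 1$ yields $\Inf(f) \le d$.

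The individual-influence bound is where the real work lies, and it reduces to the following polynomial-support lemma: \emph{any non-zero polynomial $p$ of degree at most $k$ on $\B^m$ is non-zero at $\ge 2^{m-k}$ points.} Granting this, the discrete derivative $D_i f(x') = \tfrac{1}{2}\bigl(f(x^{i \to 1}) - f(x^{i \to -1})\bigr)$ depends only on the remaining $n-1$ coordinates, has degree at most $d-1$, takes values in $\{-1,0,1\}$, and is not identically zero precisely because $x_i$ is relevant; hence $\Inf_i(f) = \Pr_{x'}[D_i f(x') \neq 0] \ge 2^{-(d-1)}$. I would prove the lemma by induction on $m$, decomposing $p(x) = q(x') + x_m r(x')$ with $\deg q \le k$ and $\deg r \le k-1$: if $r \equiv 0$, the inductive hypothesis applied to $q$ on $\B^{m-1}$ gives the conclusion after doubling; if $r \not\equiv 0$, the inductive hypothesis on $r$ produces $\ge 2^{m-k}$ values of $x'$ making $r(x') \neq 0$, and at each such $x'$ at least one of $q(x') \pm r(x') = p(x', \pm 1)$ must be non-zero.

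The main obstacle is isolating this polynomial-support lemma cleanly; everything else is a routine Parseval computation. A subtle point is ensuring the encoding switch goes through without loss — but relevance, degree, and the monomial structure are invariant under the affine change of variables $x_i \mapsto 1 - 2x_i$, so no content is lost in passing to $\B^n$.
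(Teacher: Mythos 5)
Your argument is correct and matches the paper's sketch of the Nisan--Szegedy proof essentially step for step: the paper also bounds total influence by $d$ via the Fourier formula $\Inf[f] = \sum_S |S|\widehat{f}(S)^2$ and invokes the lower bound $\Inf_i[f] \ge 2^{1-\deg_i(f)}$ (equation~(\ref{NS_fact})), which it attributes to an induction of the same kind you carry out. You have simply filled in the inductive proof of the polynomial-support lemma and the Parseval computation that the paper leaves as a citation to~\cite{NS}.
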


The idea of Nisan and Szegedy's original proof is to lower bound the \emph{influence} of a relevant variable: a polynomial of degree $d$ has total influence at most $d$, and yet the derivative in the direction of a relevant coordinate is a degree $d-1$ polynomial which is not identically zero, so it is non-zero on a random input with probability at least $1/2^{d-1}$. In other words, each relevant coordinate has influence at least $1/2^{d-1}$, so there can be at most $d\cdot 2^{d-1}$ of them.

Theorem \ref{NS} is has the correct exponential dependence on $d$ -- indeed, consider the function $f$ given by the complete binary decision tree of depth $d$ which queries a distinct coordinate at each vertex. This function has degree $d$ and $2^d - 1$ relevant variables. However, it remained open whether the multiplicative factor of $\Theta(d)$ was necessary until a recent paper by Chiarelli, Hatami and Saks showed that $O(2^d)$ suffices.\footnote{In the same paper, the authors also give an improved lower bound construction, namely, for each $d$, a degree $d$ function with $\frac{3}{2}\cdot 2^d - 2$ relevant variables. }

\begin{thm}[Chiarelli, Hatami, Saks, \cite{CHS}]\label{6.6}
A function $f: \{0,1\}^n \to \{0,1\}$ with degree $d$ has at most $(6.614)\cdot 2^d$ relevant variables.
\end{thm}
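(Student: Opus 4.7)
My plan is to follow the bucketing strategy of Chiarelli, Hatami and Saks: partition the relevant variables into ``high-influence'' and ``low-influence'' classes, and bound each separately. Let $J$ denote the set of relevant variables of $f$ (translated to the $\pm 1$-valued picture, which preserves degree and the junta size), and write $p_i = \Inf_i(f)$. The standing facts are that $p_i \geq 2^{1-d}$ for each $i \in J$ (applying Nisan-Szegedy to the nonzero degree-$(d-1)$ polynomial $\partial_i f$), and that $\sum_{i \in J} p_i \leq d$ (the total-influence bound for a degree-$d$ function).

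Fix a threshold $\tau > 0$ and set $J_H = \{i \in J : p_i \geq \tau/2^d\}$ and $J_L = J \setminus J_H$. The total-influence bound immediately gives $|J_H| \leq d \cdot 2^d/\tau$, which is $O(2^d)$ as soon as $\tau$ is taken proportional to $d$. The real work is in bounding $|J_L|$ by another $O(2^d)$ quantity. For each $i \in J_L$, the ``sensitive set'' $S_i = \{x : \partial_i f(x) \neq 0\}$ has $|S_i| = p_i \cdot 2^n \in [2 \cdot 2^{n-d},\, \tau \cdot 2^{n-d}]$. Double-counting pairs $(i,x)$ with $i \in J_L$ and $x \in S_i$, the lower bound $|S_i| \geq 2^{n-d+1}$ gives
\[
|J_L| \cdot 2^{n-d+1} \;\leq\; \sum_{i \in J_L} |S_i| \;=\; \sum_x |\{i \in J_L : x \in S_i\}|,
\]
and I would then bound the right-hand side by restricting attention to inputs $x$ on which the total sensitivity is not much larger than a constant, and arguing that the (small) supports $S_i$ of low-influence derivatives concentrate on such inputs.

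Summing the two bucket bounds and optimizing over $\tau$ should yield $k \leq C \cdot 2^d$ with $C = 6.614$. The main obstacle is the low-influence count: the naive reverse double-count $\sum_x s(f,x) = 2^n \sum_i p_i \leq d \cdot 2^n$ merely reproduces Nisan-Szegedy, so the crux of the proof is a sharper estimate that strips out the unwanted factor of $d$. Obtaining the precise constant $6.614$ will require a delicate joint optimization of the sensitivity threshold and $\tau$, together with identifying the correct structural feature of degree-$d$ boolean functions --- presumably some control on how many low-influence derivatives can simultaneously be nonzero at a given input --- that makes such a refined count possible.
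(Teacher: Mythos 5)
Your proposal does not actually prove the bound, and the gap you identify at the end --- ``a sharper estimate that strips out the unwanted factor of $d$'' --- is precisely the part that constitutes the entire theorem. The high-influence bucket is easy, and you correctly observe that the naive double-count on the low-influence side just recovers Nisan--Szegedy's $\Theta(d \cdot 2^d)$. You have not supplied the structural fact that would rescue it, only a placeholder (``presumably some control on how many low-influence derivatives can simultaneously be nonzero''); as written, the argument does not close. You have also misattributed the bucketing strategy to Chiarelli, Hatami and Saks, which is not what their paper does.

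The actual proof replaces influence with a different potential,
\[
W(f) \;=\; \sum_{i \in R(f)} 2^{-\deg_i(f)},
\]
where $\deg_i(f)$ is the degree of $f$ in the variable $x_i$. Since $\deg_i(f) \leq d$ for all $i$, one has $|R(f)| \leq 2^d \cdot W(f)$, so it suffices to bound $W(f)$ by an absolute constant. The reason $W$ is tractable is Fact~\ref{fact}: for any restriction of a single variable $x_j$, $2^{-\deg_i(f)} \leq \tfrac{1}{2}(2^{-\deg_i(f_0)} + 2^{-\deg_i(f_1)})$. Iterating over a set $H$ of coordinates each of full degree $d$ yields inequality~(\ref{summed_ineq}), $W(f) \leq |H| 2^{-d} + \E_\alpha[W(f_\alpha)]$. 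Taking $H$ to be the union of a \emph{maximal} family of disjoint degree-$d$ monomials (following the Nisan--Smolensky trick) guarantees every restriction $f_\alpha$ has degree at most $d-1$, while $|H| \leq \deg(f)\cdot\bls(f) \leq d^3$. This produces the recursion $W_d \leq d^3 2^{-d} + W_{d-1}$, which is summable; combining it with the trivial bound $W_d \leq \tfrac{1}{2}\Inf[f] \leq d/2$ for small $d$ and optimizing the crossover point gives $W_d \leq 6.614$ for all $d$. The key conceptual move you are missing is working with a quantity that is stable under restriction and automatically discounts high-degree variables, rather than trying to sharpen the influence double-count directly.
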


The main idea in \cite{CHS} is to replace influence by a different measure -- one which behaves more stably under restrictions of variables. Specifically, they define

$$W(f) = \sum_{i \in R(f)} 2^{-\deg_i(f)}$$
where $R(f)$ is the set of relevant variables for $f$, and $\deg_i(f)$ is the degree of $f$ in $x_i$. It is straightforward to check that $W(f)$ does not decrease by more than $|H|\cdot 2^{-d}$ in expectation when randomly restricting a set $H$ of coordinates with $\deg_i(f) = \deg(f) = d$. If $H$ is chosen well, these contributions are summable, and hence $W(f)$ is bounded above by some universal constant. Since $|R(f)| \leq 2^{\deg(f)}W(f)$, this implies Theorem \ref{6.6}.

The heart of the proof is therefore in choosing the set $H$ which is both \emph{small} enough so that $W(f)$ does not incur a heavy loss, and yet \emph{significant} enough that the restricted functions are of reduced complexity. The idea used in \cite{CHS} (originating in unpublished work of Nisan and Smolensky, see \cite{BdW}) is to build $H$ from a maximal collection of disjoint monomials of full degree. The number of such disjoint monomials is limited by the block sensitivity $\bls(f)$, which is always at most $d^2$, and by maximality, all of the resulting restricted functions have degree $\leq d - 1$.

\subsection{Our improvements}

The above idea certainly does the trick, but there are two somewhat substantial sources of slack in the analysis: one is the global use of the worst-case bound $\bls(f) \leq d^2$, which can be improved for any fixed $d$ with a finite computation. The other is that restricting a large disjoint collection of degree $d$ monomials actually causes a large drop in block sensitivity, which can be exploited. By leveraging both of these ideas, we are able to improve the constant $6.614$ by about 33$\%$:

\begin{thm}[Main result]\label{main}
A function $f: \{0,1\}^n \to \{0,1\}$ with degree $d$ has at most $(4.416)\cdot 2^d$ relevant variables.
\end{thm}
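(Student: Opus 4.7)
The plan is to operate within the Chiarelli-Hatami-Saks framework, which bounds $|R(f)|$ by $2^d \cdot W(f)$ and controls $W(f)$ via iterated random restrictions of carefully chosen coordinate sets $H$. The scheme from \cite{CHS} is to take $H$ to be the union of a maximal disjoint family of full-degree monomials from the polynomial representation of $f$; its size is at most $d \cdot \bls(f) \le d \cdot d^2$, and after a uniformly random restriction of $H$ every subfunction drops in degree to at most $d-1$. Iterating yields a bound of the form $W(f) \le \sum_{k \le d}(\text{loss at level } k)$, which \cite{CHS} estimates by $6.614$. My starting point is the observation, flagged in Section 1.1, that both the bound $\bls(f) \le d^2$ and the handling of how $\bls$ evolves under restriction leave substantial room for improvement.

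For the first improvement, I would replace the worst-case bound $\bls(f) \le d^2$ in the small-degree levels of the recursion by the true maxima of $\bls(f)$ over degree-$d$ functions (or any sharper upper bound thereof). The losses of the form $d \cdot \bls(f) \cdot 2^{-d}$ that control level $d$ are front-loaded in $d$, so improvements at $d \in \{2,3,4,5,\dots\}$ translate directly into a better constant. The tighter values would come from a finite combinatorial search: enumerate degree-$d$ polynomials up to symmetry (or suitably restricted representative classes) and bound block sensitivity exactly, giving improved values up through some threshold $d_0$.

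For the second improvement, I would track how $\bls$ itself decreases under the restriction rather than relying on the naive bound $\bls(f') \le \deg(f')^2$ for the restricted function $f'$. Since each of the $m$ disjoint full-degree monomials being restricted away corresponds to a disjoint sensitive block in the original function, a random restriction should in expectation reduce $\bls$ by roughly $m$, not merely reset it to $(\deg(f)-1)^2$. Formalizing this suggests carrying a joint potential $\Phi(d,b) := \sup\{W(f) : \deg(f) \le d,\ \bls(f) \le b\}$ and deriving a two-parameter recursion that exploits the coupled drops in both $\deg$ and $\bls$ at each level.

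The main obstacle, I expect, is making the block-sensitivity drop quantitative. The restriction kills the $m$ chosen witnesses by pinning down a coordinate of each, but one must rule out that other sensitive blocks emerge or enlarge after restriction, and the argument must survive taking expectations so that it couples cleanly with the $W$ recursion. Once the two-parameter recursion is in hand, the final constant $4.416$ would be obtained by numerical optimization: combine the improved small-$d$ base cases from the finite search with the asymptotic tail controlled by the tracked block-sensitivity drop, and verify that the resulting bound on $\sup_d T(d)$ meets the target, where $T(d) := \sup\{W(f) : \deg(f) = d\}$.
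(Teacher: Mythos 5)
Your high-level plan matches the paper's proof almost exactly: the two-parameter potential you call $\Phi(d,b)$ is literally the paper's $W(b,d) := \max\{W(f) : \deg(f) = d,\ \bls(f) \le b\}$, and the two "sources of slack" you identify are precisely the two improvements the paper exploits via Lemma~\ref{bd_recursive} and Table~\ref{bs table}. One reassurance on the part you flag as "the main obstacle": the block-sensitivity drop does not need to survive expectation, because it holds \emph{pointwise}. Proposition~\ref{bs decr} shows that if $M_1,\dots,M_\ell$ are disjoint full-degree monomials then for \emph{every} assignment $\alpha$ to $\cup M_i$ one has $\bls(f_\alpha) \le \bls(f) - \ell$; the proof is that each $M_i$ contributes a sensitive block $C_i \subset M_i$ (since a full-degree monomial forces non-constancy of $f$ on $\{0,1\}^{M_i}$ for any fixed setting of the other coordinates), and these $C_i$'s can be adjoined to any family of sensitive blocks for $f_\alpha$ to get a larger disjoint family for $f$. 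So the two-parameter recursion in Lemma~\ref{bd_recursive} is clean: the loss at a step with $\ell$ full-degree monomials restricted is $\ell d 2^{-d}$, and you descend to $W(b-\ell, d-k)$ for some $k\ge 1$.

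Where your proposal has a genuine gap is the method for obtaining the sharper small-$d$ block sensitivity bounds. You suggest "enumerate degree-$d$ polynomials up to symmetry and bound block sensitivity exactly," but this is computationally infeasible: the paper needs bounds out to $d=14$, where the candidate block sensitivity is around $114$, so a witnessing function would live on up to $\sim 114$ variables, and no enumeration (even up to symmetry) can touch that space. The paper instead uses a relaxation in the spirit of the Nisan--Szegedy $\bls \le 2\deg^2$ proof: if $\bls(f) = b$ and $\deg(f) = d$, then via Fact~\ref{bs_reduction} and symmetrization one obtains a univariate real polynomial $p$ of degree $\le d$ with $p(0)=0$, $p(1)=1$, $p(k)\in[0,1]$ for $1 \le k \le b$, and $p(b)\in\{0,1\}$. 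Feasibility of this system in the $d$ coefficients of $p$ is a small linear program (two LPs per $b$, one for each value of $p(b)$), and maximizing the feasible $b$ for each $d \le 14$ by exact-arithmetic simplex gives Table~\ref{bs table}. This LP relaxation, not exhaustive search, is the missing ingredient that makes the small-$d$ improvement actually computable; plugging those values into the two-parameter recursion and iterating to $d=30$ yields $W^* \le 4.4158$.
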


\section{Preliminaries}

\textbf{Restrictions:} For a function $f:\{0,1\}^n \to \{0,1\}$, a set $H \subset[n]$, and an assignment $\alpha: H \to \{0,1\}$, we denote by $f_{\alpha}$ the restricted function obtained by setting the variables $x_h$ to $\alpha(h)$ for $h \in H$. We will sometimes use $f(\alpha_H, x)$ for $f_\alpha(x)$ if we want to be explicit about the set of coordinates which have received the assignment by $\alpha$. 

\textbf{Influence:} The influence of coordinate $i$ on $f$, or $\Inf_i[f]$, is the probability that, for a uniformly random input $x$, flipping the $i$th bit of $x$ causes the value of $f(x)$ to flip. The total influence $\Inf[f] = \sum_{i \in [n]} \Inf_i[f]$ can also be expressed in terms of the Fourier coefficients of $f$, namely
$$ \Inf[f] = \sum_{S \subseteq [n]} |S|\widehat{f}(S)^2.$$ 
Since the degree of $f$ remains unchanged when $f$ is expressed as a multilinear polynomial over $\{0,1\}^n$ (as we consider in this paper) or $\{1, -1\}^n$ (as in the Fourier expansion), the above formula makes it clear that a Boolean function of degree $d$ has $\Inf[f] \leq d$. As mentioned in the introduction, the following useful fact is from \cite{NS}, and can be proved by induction:
\be\label{NS_fact} \Inf_i[f] \geq 2^{1-\deg_i(f)} . \ee

\textbf{Block sensitivity:} For a set $B \subset [n]$ and a string $x \in \{0,1\}^n$, we denote by $x^B$ the string obtained from $x$ by flipping all the bits $x_b$ for $b \in B$. Recall that the block sensitivity of $f$ at an input $x$ (denoted $\bls_x(f)$) is the maximum number $b$ of disjoint blocks $B_1, \dots, B_b \subset [n]$ such that $f(x) \neq f(x^{B_i})$ for all $1 \leq i \leq b$, and the block sensitivity of $f$ (denoted $\bls(f)$) is the maximum of $\bls_x(f)$ over all inputs $x$.  It is well-known that block sensitivity and degree are polynomially related:
\be\label{bs_vs_deg} \deg(f)^{1/3} \leq \bls(f) \leq \deg(f)^2 \ee
although neither bound is known to be sharp. The best known constructions have $\bls(f) = \Theta(\deg(f)^{1/2})$ and $\bls(f) = \Theta(\deg(f)^{\log_3(6)}) = \Theta(\deg(f)^{1.6309...}) $ respectively. See \cite{BdW} and  \cite{HKP} for details and for relationships to many other complexity measures.

\textbf{The measure $W(f)$:} Recall that $$W(f) := \sum_{i \in R(f)} 2^{-\deg_i(f)},$$ where $R(f)$ is the set of relevant coordinates (i.e. coordinates $i$ for which $\Inf_i[f] > 0$) and $\deg_i(f)$ is the degree of largest degree monomial appearing  in $f$ (with non-zero coefficient) that contains $x_i$.  The behavior of $W$ under restrictions boils down to the following inequality, whose simple proof we reproduce below for completeness.

\begin{fact}[\cite{CHS}]\label{fact} For any relevant coordinates $i \neq j$, let $f_0$ and $f_1$ be the restrictions obtained from $f$ by setting $x_j$ to 0 and 1 respectively. Then
\be \label{key_ineq}2^{-\deg_i(f)} \leq 2^{-\deg_i(f_0) - 1} + 2^{-\deg_i(f_1) - 1} \ee
\end{fact}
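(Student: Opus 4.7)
My plan is to decompose $f$ based on its dependence on $x_j$ and then analyze how this decomposition interacts with $\deg_i$. Write $f(x) = x_j \cdot g(x) + h(x)$ where $g, h$ are multilinear polynomials not depending on $x_j$, so that $f_0 = h$ and $f_1 = g + h$. Setting $d = \deg_i(f)$, the desired inequality is equivalent to $2^{-\deg_i(f_0)} + 2^{-\deg_i(f_1)} \geq 2^{1-d}$, which I will establish by a case analysis on whether $x_i$ remains relevant in each of the two restrictions.

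In the generic case where $x_i$ is relevant in both $f_0$ and $f_1$, I would show that both $\deg_i(f_0) \leq d$ and $\deg_i(f_1) \leq d$. The first is immediate since every monomial of $f_0$ is a monomial of $f$. For the second, any monomial of $f_1$ with multi-index $T$ (where $j \notin T$) has coefficient $c_T + c_{T \cup \{j\}}$, so if it is non-zero then either $c_T \neq 0$ (making $T$ a monomial of $f$ of the same size) or $c_{T \cup \{j\}} \neq 0$ (making $T \cup \{j\}$ a monomial of $f$ that is one larger); either way $|T| \leq d$. These two bounds yield $2^{-\deg_i(f_0) - 1} + 2^{-\deg_i(f_1) - 1} \geq 2 \cdot 2^{-d-1} = 2^{-d}$.

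The real work is in the edge cases where $x_i$ is relevant in $f$ but becomes irrelevant after one of the restrictions; by the identity $\Inf_i[f] = \tfrac{1}{2}(\Inf_i[f_0] + \Inf_i[f_1])$, it cannot be irrelevant in both. Suppose, say, $x_i$ is irrelevant in $f_1$, so $c_T + c_{T \cup \{j\}} = 0$ for every $T$ with $i \in T$ and $j \notin T$. Picking a top-degree monomial $S$ of $f$ with $i \in S$, the case $j \notin S$ would give $c_{S \cup \{j\}} = -c_S \neq 0$, producing a monomial of degree $d+1$ containing $x_i$ in $f$, a contradiction. Hence $S = T \cup \{j\}$ with $|T| = d - 1$ and $c_T = -c_S \neq 0$, so $T$ appears as a monomial of $f_0$ containing $x_i$ and $\deg_i(f_0) \geq d - 1$. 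The same cancellation identity rules out $\deg_i(f_0) \geq d$ (as this would force a degree $d+1$ monomial of $f$ containing $x_i$), so $\deg_i(f_0) = d - 1$, and the single term $2^{-\deg_i(f_0) - 1} = 2^{-d}$ already establishes the inequality. The case of $x_i$ irrelevant in $f_0$ is handled symmetrically.

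I do not anticipate a substantial obstacle here; the only subtlety is tracking the cancellations in $f_1 = g + h$ carefully enough to dispose of the edge cases, and this is done by leveraging the hypothesis $\deg_i(f) = d$ against the linear relations $c_T + c_{T \cup \{j\}} = 0$ forced by the irrelevance assumption.
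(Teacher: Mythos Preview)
Your argument is correct and rests on the same decomposition $f = x_j(f_1 - f_0) + f_0$ and the same top-degree cancellation idea as the paper, but the case split is organized differently. The paper branches on whether $\deg_i(f_0) \le d-1$ (then the single term $2^{-\deg_i(f_0)-1}$ already gives at least $2^{-d}$, regardless of $f_1$) or $\deg_i(f_0) = d$ (then the leading $x_i$-monomials of $f_0$ must cancel in $f_1-f_0$, forcing $\deg_i(f_1)=d$ and turning the inequality into an equality). You instead branch on whether $x_i$ stays relevant in both restrictions: in the generic case you only need the crude bounds $\deg_i(f_0),\deg_i(f_1)\le d$, while your edge-case analysis pins down $\deg_i = d-1$ on the surviving side. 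The paper's route is shorter; yours is more explicit about what happens when $x_i$ becomes irrelevant in one restriction, a point the paper's two-line proof leaves implicit under the convention that such a term contributes zero.
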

\begin{proof}
Write $f = x_jf_1 + (1-x_j)f_0 = x_j(f_1 - f_0) + f_0$, from which it is clear that $\deg_i(f) \geq \deg_i(f_0)$. If $\deg_i(f) \geq 1 + \deg_i(f_0)$ then the inequality is true independently of $\deg_i(f_1)$. Otherwise, it must be that $\deg_i(f) = \deg_i(f_0)$, in which case the leading degree monomials for $x_i$ must cancel in $f_1 - f_0$. But this implies $\deg_i(f) = \deg_i(f_1) = \deg_i(f_0)$, and so the inequality becomes an equality in this case.
\end{proof}
By summing (\ref{key_ineq}) over $i \in R(f)$ and iterating over restrictions of more variables, one obtains

\be \label{summed_ineq} W(f) \leq |H|\cdot 2^{-d} + \frac{1}{2^{|H|}}\sum_{\alpha: H \to \{0,1\}} W(f_\alpha) \ee
for any set $H \subset [n]$ with $\deg_i(f) = d$ for all $i \in H$. As in \cite{CHS}, we define $$W_d := \max_{\deg(f) = d} W(f).$$ If $H$ is chosen as a maximal collection of degree $d$ monomials in $f$, then each $f_\alpha$ has degree at most $d-1$. An unpublished argument of Nisan and Smolensky (which we essentially use in the proof of Lemma \ref{bd_recursive} below) implies that $|H| \leq \deg(f)\cdot \bls(f) \leq d^3$, and so  (\ref{summed_ineq}) yields the recursive inequality
$$W_d \leq d^3 \cdot 2^{-d} + W_{d-1}.$$
This is already summable, but the bound $W(f) \leq \frac{1}{2}\Inf[f] \leq d/2$ is preferable for small $d$, and optimizing over the choice of the two bounds yields $W_d \leq 6.614$ for all $d$.

\section{Improving the constant}

\subsection{Don't spend it all in one place}

Our first new idea is simply to keep track of block sensitivity through the restriction process: the main observation is Proposition \ref{bs decr} below, which says that if $f$ has $\ell$ disjoint monomials of maximum degree, then by assigning any values to the variables in these monomials, the block sensitivity of the restricted function decreases by $\ell$. So, if we have to restrict many variables in order to drop the degree of $f$ (i.e. to hit all the maximum degree monomials), then we must ``spend" our limited supply of block sensitivity, and in the future it will become much easier to lower the degree again. \\

\begin{prop}\label{bs decr}
If $f: \{0,1\}^n \to \{0,1\}$ has $\ell$ \textbf{disjoint} monomials $M_1, \dots, M_\ell$, each of  degree $d = \deg(f)$, then for any assignment $\alpha: \cup M_i \to \{0,1\}$, the restricted function $f_\alpha$ has \emph{$$\bls(f_\alpha) \leq \bls(f) - \ell.$$ }
In particular, \emph{$\ell \leq \bls(f)$}.
\end{prop}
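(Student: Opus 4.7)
The plan is to take an input witnessing the block sensitivity of $f_\alpha$ and extend it to an input of $f$ admitting $\ell$ additional disjoint sensitive blocks, one inside each $M_i$. Setting $H := \bigcup_i M_i$, I would let $y \in \{0,1\}^{[n]\setminus H}$, together with disjoint blocks $B_1,\dots,B_b \subseteq [n]\setminus H$, witness $\bls(f_\alpha) = b$, and consider the extended input $z := (\alpha_H, y) \in \{0,1\}^n$. The $B_j$ remain sensitive for $f$ at $z$, so it suffices to exhibit $\ell$ further disjoint sensitive blocks $B'_1, \ldots, B'_\ell$ with $B'_i \subseteq M_i$. This yields $\bls(f) \geq b + \ell$, and the ``In particular'' clause then follows from $\bls(f_\alpha) \geq 0$.

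The crux is to show that, for each $i \in [\ell]$, the restriction
\[
g_i(w) \,:=\, f\bigl(w,\,\alpha_{H\setminus M_i},\,y\bigr), \qquad w \in \{0,1\}^{M_i},
\]
is not a constant function. Granted this, I would pick $\beta_i \in \{0,1\}^{M_i}$ with $g_i(\beta_i) \neq g_i(\alpha_{M_i}) = f(z)$ and set $B'_i := \{j \in M_i : \alpha_j \neq (\beta_i)_j\}$; then $B'_i \subseteq M_i$ is non-empty, and $f(z^{B'_i}) = g_i(\beta_i) \neq f(z)$, so $B'_i$ is sensitive at $z$. Disjointness of the $M_i$'s makes the $B'_i$ pairwise disjoint, and each $B'_i \subseteq H$ is disjoint from every $B_j \subseteq [n] \setminus H$, as required.

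To verify the non-constancy of $g_i$, I would expand $f = \sum_T c_T \prod_{j \in T} x_j$ and compute the coefficient of the full monomial $\prod_{j \in M_i} w_j$ in $g_i$. Only those $T$ with $T \supseteq M_i$ contribute, but $|T| \leq \deg(f) = d = |M_i|$ forces $T = M_i$. Hence the coefficient equals $c_{M_i} \neq 0$, showing that $g_i$ has a nonzero degree-$d$ term and cannot be a constant.

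The one substantive point in the argument is this degree-counting observation that the top monomial on $M_i$ survives any restriction of its complementary variables, a direct consequence of $|M_i| = \deg(f)$. Everything else is bookkeeping, so I do not anticipate a genuine obstacle.
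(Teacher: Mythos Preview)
Your proposal is correct and follows essentially the same route as the paper: extend a witness for $\bls(f_\alpha)$ to the full input and harvest one extra sensitive block inside each $M_i$ from the non-constancy of $w \mapsto f(w,\alpha_{H\setminus M_i},y)$. Your explicit degree-counting justification that the top monomial on $M_i$ survives any restriction of the complementary variables is exactly the point the paper invokes (more tersely) when it says ``since $M_i$ is a maximum degree monomial in $f$, each of the functions $\{0,1\}^{M_i}\ni x\mapsto f(x,z)$ is non-constant for any $z$.''
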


\begin{proof}
Let $M = \cup_{i=1}^\ell M_i$ and $b = \bls(f_{\alpha}) = \bls_y(f_\alpha)$, for some $y \in \{0,1\}^{[n] \setminus M}$. Then there are $b$ disjoint blocks $B_1, \dots, B_b \subset [n] \setminus M$ with $f(\alpha_M, y) \neq f(\alpha_M, y^{B_j})$ for each $j$. Since $M_i$ is a maximum degree monomial in $f$, each of the functions $\{0,1\}^{M_i} \ni x  \mapsto f(x, z)$ is non-constant for any $z$. Therefore, for each $i$, there is a block $C_i \subset M_i$ with $f(\alpha_M^{C_i}, y) \neq f(\alpha_M, y)$. Therefore $\{C_1, \dots, C_\ell, B_1, \dots, B_b\}$ is a collection of disjoint sensitive blocks for $f$ at the input $(\alpha_M, y) \in \{0,1\}^n$, and so $\bls(f) \geq b + \ell$.
\end{proof}

To keep track of $W$, degree and block sensitivity simultaneously, we define
 $$W(b, d) := \max_{\substack{f \text{ with }\bls(f) \leq b  \\ \text{ and } \deg(f) = d }} W(f).$$ 
 
Note that $W(0, d) = 0$, $W(b, 0) = 0$, and $W(b, d) \leq W_d$ for any $b$. By (\ref{bs_vs_deg}), we have $W(d^2, d) = W_d$, and we make the convention that $W(b, d) = 0$ for $b > d^2$. 

\begin{lem}\label{bd_recursive}
For each $b,d$ with $b \leq d^2$, we have
$$W(b,d) \leq  \max_{(\ell, k) \in \{1, \dots, b\} \times \{1, \dots, d \}}\left(\ell \cdot d  \cdot2^{-d} + W(b - \ell, d - k)\right) $$
\end{lem}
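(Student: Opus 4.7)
The plan is to mimic the Nisan--Smolensky argument sketched just before the statement, but to carefully track the drop in block sensitivity guaranteed by Proposition \ref{bs decr}. Fix any $f: \{0,1\}^n \to \{0,1\}$ with $\deg(f) = d \geq 1$ and $\bls(f) \leq b$ (the case $d = 0$ gives $W(f) = 0$ trivially), and let $M_1, \dots, M_\ell$ be a maximal collection of pairwise disjoint monomials of full degree $d$ appearing in $f$. Set $H = \bigcup_{i=1}^{\ell} M_i$, so that $|H| \leq \ell d$.

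Two properties of $H$ will be crucial. First, by the ``in particular'' clause of Proposition \ref{bs decr}, $\ell \leq \bls(f) \leq b$, so $\ell \in \{1,\dots,b\}$. Second, for \emph{any} assignment $\alpha: H \to \{0,1\}$, the restricted function $f_\alpha$ satisfies $\bls(f_\alpha) \leq b - \ell$ directly by Proposition \ref{bs decr}, and $\deg(f_\alpha) \leq d-1$ by maximality of the collection: any other degree-$d$ monomial of $f$ must share a variable with some $M_i$, hence intersect $H$, and therefore be reduced or killed by $\alpha$.

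Plugging this $H$ into (\ref{summed_ineq}) would then give
$$W(f) \;\leq\; |H|\cdot 2^{-d} + \frac{1}{2^{|H|}}\sum_{\alpha: H \to \{0,1\}} W(f_\alpha) \;\leq\; \ell d \cdot 2^{-d} + \frac{1}{2^{|H|}}\sum_\alpha W(f_\alpha).$$
For each $\alpha$, set $k_\alpha := d - \deg(f_\alpha) \in \{1, \dots, d\}$. Since $\bls(f_\alpha) \leq b - \ell$ and $\deg(f_\alpha) = d - k_\alpha$, the definition of $W(\cdot,\cdot)$ gives $W(f_\alpha) \leq W(b-\ell,\, d-k_\alpha) \leq \max_{k \in \{1,\dots,d\}} W(b-\ell, d-k)$, a bound independent of $\alpha$. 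Averaging then yields $W(f) \leq \ell d\cdot 2^{-d} + \max_k W(b-\ell, d-k)$, and since $\ell \in \{1,\dots,b\}$ this is in turn at most the maximum over $(\ell', k) \in \{1,\dots,b\}\times\{1,\dots,d\}$ asserted in the lemma.

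I do not expect any serious obstacle. The one slightly delicate point is that different $\alpha$ can leave $f_\alpha$ with different degrees, so one genuinely needs to allow a maximum over the degree-drop parameter $k$ (rather than pin it to, say, $k=1$) when invoking the induction hypothesis. Maximality of the disjoint monomial collection is what ensures $k_\alpha \geq 1$ uniformly, and the block-sensitivity bookkeeping from Proposition \ref{bs decr} is what lets the first argument of $W$ shrink from $b$ to $b - \ell$ in the recursion.
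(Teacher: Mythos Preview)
Your proposal is correct and follows essentially the same argument as the paper: choose a maximal disjoint family of degree-$d$ monomials, apply inequality (\ref{summed_ineq}) to the union $H$, use maximality to force $\deg(f_\alpha) \le d-1$, and use Proposition~\ref{bs decr} to force $\bls(f_\alpha) \le b-\ell$ before taking the maximum over $(\ell,k)$. If anything, you are slightly more careful than the paper in explicitly noting $\ell \in \{1,\dots,b\}$ via Proposition~\ref{bs decr} and in handling the per-$\alpha$ degree drop $k_\alpha$ before passing to the uniform $\max_k$.
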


\begin{proof}
Suppose $f$ has degree $d$ and $\bls(f) \leq b$. Let $M_1, \dots, M_\ell$ be a maximal collection of disjoint degree $d$ monomials in $f$, and let $H = \cup_i M_i$. By inequality (\ref{summed_ineq}), 

$$W(f) \leq \underbrace{|H|\cdot 2^{-d}}_{= \, \ell \cdot d \cdot 2^{-d}} + \underset{\alpha: H \to \{0,1\}}{\E}[W(f_\alpha)]$$

Because the collection $\{M_1, \dots, M_\ell\}$ is maximal, $H$ hits every degree $d$ monomial and hence each $f_\alpha$ has degree $d_{\alpha} \leq d - 1$. By Proposition \ref{bs decr}, each $f_{\alpha}$ has $\bls(f_\alpha) \leq b - \ell$. Since $W(\cdot ,d)$ is monotone (for feasible inputs), it follows that for each $\alpha$, $W(f_\alpha) \leq W(b - \ell, d - k')$, where $k' = \arg\max_{k \in \{1, \dots, d\}} W(b - \ell, d - k)$. Taking the maximum over all possible values of $\ell \in \{1, \dots, b\}$ yields the desired bound.
\end{proof}

Since $W_d$ is bounded and increasing\footnote{It is shown in \cite{CHS} that $W_d \geq 2^{-d} + W_{d-1}$, and in fact their lower bound construction can be turned into a proof that $W_d \geq 2\cdot 2^{-d} + W_{d-1}$.}, so $W^* := \lim_{d \to \infty} W_d$ exists. Since $W_d = W(d^2, d)$, the following corollary comes easily from Lemma \ref{bd_recursive}.  
\begin{cor}\label{bd_cor}
For any $d$, 
$$W^* \leq W(d^2, d) + \sum_{r = d+ 1}^{\infty} r^3 2^{-r} $$
\end{cor}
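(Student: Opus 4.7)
\medskip

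\noindent\textbf{Proof plan.} The plan is to iterate Lemma \ref{bd_recursive} along the diagonal quantities $W_r = W(r^2, r)$ to obtain a telescoping recursion, and then take a limit. Concretely, I would establish the one-step inequality
\[
W_r \leq r^3 \cdot 2^{-r} + W_{r-1} \qquad \text{for every } r \geq 1,
\]
and then sum it from $r = d+1$ to $r = D$ and let $D \to \infty$.

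\medskip

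\noindent To prove the one-step inequality, fix $r \geq 1$ and apply Lemma \ref{bd_recursive} with $b = r^2$. This produces a pair $(\ell, k)$ with $\ell \in \{1, \dots, r^2\}$ and $k \in \{1, \dots, r\}$ such that
\[
W_r = W(r^2, r) \leq \ell \cdot r \cdot 2^{-r} + W(r^2 - \ell,\, r - k).
\]
For the first term, the constraint $\ell \leq r^2$ immediately gives $\ell \cdot r \cdot 2^{-r} \leq r^3 \cdot 2^{-r}$. For the second term, note that by definition $W(r^2 - \ell, r - k) \leq W_{r-k}$ (and by the convention $W(b,d) = 0$ for $b > d^2$, this also handles the case $r^2 - \ell > (r-k)^2$). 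Then using that $W_r$ is monotonically increasing in $r$ (as recorded in the footnote, $W_r \geq W_{r-1} + 2^{-r}$), we get $W_{r-k} \leq W_{r-1}$. Combining these gives exactly the desired telescoping inequality.

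\medskip

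\noindent Iterating this from $r = D$ down to $r = d+1$ produces
\[
W_D \;\leq\; W_d + \sum_{r = d+1}^{D} r^3 \cdot 2^{-r} \;=\; W(d^2, d) + \sum_{r = d+1}^{D} r^3 \cdot 2^{-r}.
\]
Since $W^* = \lim_{D \to \infty} W_D$ exists and the tail series $\sum_{r=d+1}^{\infty} r^3 \cdot 2^{-r}$ converges, letting $D \to \infty$ yields the claimed bound.

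\medskip

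\noindent\textbf{Main obstacle.} There is no deep obstacle here: the argument is essentially a bookkeeping reduction from Lemma \ref{bd_recursive}. The only subtle points are (i) ensuring that we are allowed to bound $W(r^2 - \ell, r - k)$ by $W_{r-k}$ regardless of whether $r^2 - \ell$ exceeds $(r-k)^2$, which is guaranteed by the stated convention, and (ii) invoking the monotonicity of $W_r$ to replace $W_{r-k}$ by $W_{r-1}$, which the footnote records. The real work, which goes into Proposition \ref{bs decr} and Lemma \ref{bd_recursive}, has already been done; the corollary is the clean summation that harvests those results.
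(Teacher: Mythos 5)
Your proof is correct and takes the same route the paper intends (the paper offers no explicit proof, only remarking that the corollary ``comes easily'' from Lemma \ref{bd_recursive}): extract the one-step recursion $W_r \leq r^3\, 2^{-r} + W_{r-1}$ from the lemma by bounding $\ell \leq r^2$, dropping to $W_{r-1}$ via monotonicity, and then telescoping. Your care about the convention $W(b,d)=0$ for $b>d^2$ and the monotonicity footnote is appropriate, and no gap remains.
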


Lemma \ref{bd_recursive} yields explicit bounds on $W(b,d)$ for any finite $(b,d)$, which in turn yields an explicit bound on $W^*$ via Corollary \ref{bd_cor}. For small values ($d \leq 9$), the bound 
$$W(b,d) \leq W_d \leq \max _{\deg(f) = d}\sum_{i \in R(f)} \frac{\Inf_i[f]}{2} = \frac{d}{2}  $$
is better than the one from Lemma \ref{bd_recursive}. Extracting numerical bounds recursively yields
$$W(50^2, 50) \leq 5.07812...$$
which implies the same bound (to around 10 decimal digits) on $W^*$.

\subsection{Tighter bounds on block sensitivity for low degree functions}

To further reduce our estimate of $W^*$, we focus on functions of low degree, which clearly have the most influence on the bounds. Specifically, we produce sharper upper bounds on the block sensitivity of such functions, by solving a small set of linear programs. We begin with a simple reduction to linear program feasibility, using ideas from the original proof of $\bls(f) \leq 2\deg(f)$ from \cite{NS}.\footnote{The ``2" in this bound can be removed by using repeated function composition (or \emph{tensorization}), as shown in \cite{Tal}.}

\begin{fact}\label{bs_reduction}
If there exists a function $f : \{0,1\}^n \to \{0,1\}$ of degree $d$ with block sensitivity $b$, then there exists another function $g: \{0,1\}^b \to \{0,1\}$ of degree $\leq d$ with $g(0) = 0$ and $g(w) = 1$ for each vector $w$ of hamming weight 1.  
\end{fact}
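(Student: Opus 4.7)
The plan is to mimic the classical Nisan--Szegedy proof that $\bls(f) \le 2\deg(f)$, by restricting $f$ to a subcube spanned by the sensitive blocks and collapsing each block to a single variable. Concretely, I would start by picking an input $x \in \{0,1\}^n$ with $\bls_x(f) = b$, together with disjoint sensitive blocks $B_1, \dots, B_b \subset [n]$. Without loss of generality, $f(x) = 0$ (else replace $f$ by $1-f$, which has the same degree), so that $f(x^{B_i}) = 1$ for each $i$.

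Next, I would introduce new variables $y_1, \dots, y_b$, one per block, and define the candidate $g : \{0,1\}^b \to \{0,1\}$ by
\[
g(y_1, \dots, y_b) \;=\; f\bigl(\tilde{x}(y)\bigr),\qquad
\tilde{x}_j(y) = \begin{cases} x_j \oplus y_i & \text{if } j \in B_i, \\ x_j & \text{if } j \notin \bigcup_i B_i. \end{cases}
\]
Setting $y = 0$ reproduces $x$, while setting $y = e_i$ flips exactly the coordinates in $B_i$; thus $g(0) = f(x) = 0$ and $g(e_i) = f(x^{B_i}) = 1$, as required.

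It remains to verify that $\deg(g) \le d$. For this, I would note that because the $B_i$ are disjoint, each coordinate $x_j$ is replaced by the constant $x_j$ (a degree-$0$ expression in $y$), or by $y_i$ or $1-y_i$ depending on the bit $x_j$ (a degree-$1$ expression in a single $y_i$). Substituting these affine expressions into the multilinear polynomial representing $f$ (which has degree $d$) produces a polynomial in $y_1, \dots, y_b$ of degree at most $d$; multilinearizing (using $y_i^2 = y_i$) does not increase the degree, giving the desired multilinear representation of $g$.

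The argument is essentially bookkeeping: the only point that requires any care is keeping track of the substitution $x_j \mapsto x_j \oplus y_i$ over $\{0,1\}$ and confirming that, block by block, each new variable $y_i$ enters as a degree-$1$ factor so that the overall degree cannot exceed $d$. There is no real obstacle to this proof; it is a clean restriction argument, and the disjointness of the $B_i$ plus the fact that $x^{B_i}$ differs from $x$ precisely on $B_i$ are what make the boundary conditions $g(0)=0$ and $g(e_i)=1$ fall out automatically.
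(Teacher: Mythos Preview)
Your proposal is correct and is essentially the same argument as the paper's: the paper first shifts the input so that the maximally block-sensitive point is $0$ (and negates so $f(0)=0$), then sets $g(y_1,\dots,y_b)=f(y_1^{B_1},\dots,y_b^{B_b})$; your version skips the shift and instead substitutes $x_j \mapsto x_j \oplus y_i$, which is the same construction up to that cosmetic normalization. Your explicit check that each substituted coordinate is affine in a single $y_i$ (hence $\deg(g)\le d$) is exactly the point the paper leaves implicit.
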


\begin{proof}
If $f(x)$ attains maximal block sensitivity at $z$, then $f(x \oplus z)$ attains maximal block sensitivity at 0, so without loss of generality we may assume $z = 0$, and possibly replacing $f$ by $1-f$ we may also assume that $f(0) = 0$. If $B_1, \dots, B_b$ are sensitive blocks for $f$ at 0, then define $$g(y_1, \dots, y_b) = f(\underbrace{y_1, \dots, y_1}_{B_1}, \dots, \underbrace{y_b, \dots, y_b}_{B_b})$$
so that for each coordinate vector $e_i$, $g(e_i) = f(\textbf{1}_{B_i}) = f(0^{B_i}) = 1$.
\end{proof}

For any $d \geq 1$, define the moment map $m_d: \R \to \R^d$ by $m(t) = (t, t^2,\dots, t^d)$.

\begin{prop}
If there exists a degree $d$ function $f: \{0,1\}^n \to \{0,1\}$ with block sensitivity $b$, then there exists $\tau \in \{0,1\}$ such that the following set of linear inequalities has a solution $p \in \R^d$:
\bea \nn
\langle p, m_d(1) \rangle &=& 1 \\ \label{LP}
0 \leq \langle p, m_d(k) \rangle  &\leq & 1 \,\, \text{ for each } k \in \{2, \dots, b-1\} \\ \nn
\langle p, m_d(b) \rangle &=& \tau
\eea
\end{prop}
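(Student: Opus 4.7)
The plan is to combine Fact \ref{bs_reduction} with the classical Minsky--Papert symmetrization trick, which turns a low-degree multivariate Boolean function into a low-degree univariate real polynomial whose values at $0,1,\dots,b$ are constrained.

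First, I invoke Fact \ref{bs_reduction} to obtain a function $g: \{0,1\}^b \to \{0,1\}$ of degree at most $d$ with $g(\mathbf{0}) = 0$ and $g(e_i) = 1$ for each standard basis vector $e_i$. Next, I symmetrize: define
$$q(t) := \frac{1}{\binom{b}{t}} \sum_{x \in \{0,1\}^b : |x| = t} g(x),$$
which, extended by averaging the multilinear polynomial representing $g$ over all permutations of $[b]$, is a univariate polynomial in $t$ of degree at most $d$. (Concretely, if $g(x) = \sum_S \widehat{g}_S \prod_{i \in S} x_i$, the symmetrized polynomial has coefficient at $\binom{t}{|S|}$ equal to the sum of $\widehat{g}_S$ over all $S$ of a given size, and $\binom{t}{k}$ is a polynomial of degree $k$ in $t$.)

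Now I read off the four key values of $q$. Since $q(t) \in [0,1]$ for each integer $t \in \{0,1,\dots,b\}$ (it is an average of $\{0,1\}$-values), and since there is a unique input of weight $0$ and a unique input of weight $b$, we get $q(0) = g(\mathbf{0}) = 0$ and $q(b) = g(\mathbf{1}) \in \{0,1\}$. Also $q(1) = \frac{1}{b}\sum_{i=1}^b g(e_i) = 1$. Because $q(0) = 0$, the polynomial has no constant term, so we may write
$$q(t) = \sum_{j=1}^d p_j t^j = \langle p, m_d(t) \rangle$$
for some $p \in \R^d$ (padding with zero coefficients if $\deg q < d$).

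Setting $\tau := q(b) \in \{0,1\}$, the vector $p$ satisfies $\langle p, m_d(1) \rangle = q(1) = 1$, $\langle p, m_d(k) \rangle = q(k) \in [0,1]$ for $k \in \{2,\dots,b-1\}$, and $\langle p, m_d(b) \rangle = q(b) = \tau$, which is exactly the linear system (\ref{LP}). The only nontrivial step is verifying that the symmetrization of a degree-$d$ multilinear polynomial in $b$ variables is a univariate polynomial of degree at most $d$, and that is a standard (Minsky--Papert) computation via the $\binom{t}{k}$ identity.
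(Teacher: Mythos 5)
Your proof is correct and follows essentially the same route as the paper: apply Fact \ref{bs_reduction} to get $g$, symmetrize (Minsky--Papert) to obtain a univariate polynomial $p$ of degree at most $d$ with $p(0)=0$, $p(1)=1$, $p(k)\in[0,1]$, $p(b)=\tau$, and read off the coefficient vector. You spell out slightly more carefully why $q(0)=0$ lets you drop the constant term so the coefficients land in $\R^d$, which the paper leaves implicit, but the argument is the same.
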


\begin{proof}
If such an $f$ exists, then let $q(x_1, \dots, x_b) = \frac{1}{b!}\sum_{\sigma \in S_b}g(x_{\sigma(1)}, \dots, x_{\sigma(b)})$, where $g$ comes from Fact \ref{bs_reduction}, and set $\tau = g(1, 1, \dots, 1)$. It is well known (see \cite{BdW}) that there is a univariate polynomial $p: \R \to \R$ of degree at most $d$ such that for any $x \in \{0,1\}^b$, $q(x_1, \dots, x_b) = p(x_1 + \dots + x_b)$. For each $k \in \{1, \dots, b\}$, $p(k)$ is therefore the average value of $g$ on boolean vectors with hamming weight $k$, so in particular $p(k) \in [0,1]$. We also know $p(0) = g(0) = 0$, $p(b) = g(1,\dots, 1) = \tau$, and $p(1) = \frac{1}{n}\sum_i g(e_i) = 1$, and hence the coefficients of $p$ provide a solution to the set of linear inequalities.  
\end{proof}

Using the simplex method with exact (rational) arithmetic in Maple, we compute the largest $b=b(d)$ for which the LP (\ref{LP}) is feasible for $1 \leq d \leq 14$, which yields upper bounds on block sensitivity for low degree boolean functions. These bounds are summarized in Table \ref{bs table}.

\begin{table}[]
\resizebox{0.9\textwidth}{!}{%
\begin{tabular}{|l|l|l|l|l|l|l|l|l|l|l|l|l|l|l|}
\hline
$\deg(f)$ &1 & 2 & 3 & 4 & 5 & 6 & 7 & 8 & 9 & 10 & 11 & 12 & 13 & 14 \\ \hline
$\bls(f) \leq$ &1 & 3 & 6 & 10 & 15 & 21 & 29 & 38 & 47 & 58 & 71 & 84 & 99 & 114 \\ \hline

\end{tabular}%
}
\caption{LP bounds on block sensitivity for low degree functions.}
\label{bs table}
\end{table}

Setting $W(b, d) = 0$ for $b > b(d)$, for each $d = 1, \dots, 14$, we can recompute the recursive bounds from Lemma \ref{bd_recursive} and obtain
$$W(30^2, 30) \leq 4.41571 \implies W^* \leq 4.4158.$$

\noindent\textbf{Remark:} The largest known separation between block sensitivity and degree is exhibited by a function on 6 variables with degree 3. By a tensorization lemma in \cite{Tal}, any degree $d$ function $f$ with block sensitivity $b$ yields an infinite family of boolean functions $f_k$ with $\deg(f_k) = d^k$ and $\bls(f_k) \geq b^k$. Hence, if an entry $(d, b(d))$ in Table \ref{bs table} is tight for some $d \geq 4$, then by Fact \ref{bs_reduction} there is a function on $b(d)$ variables exhibiting a larger-than-currently-known separation between degree and block sensitivity. If $\bs(f) = \deg(f)^{\log_3(6)}$ is in fact the optimal separation, then our techniques would show $W^* < 3.96$.

\begin{table}[]
\resizebox{\textwidth}{!}{%
\begin{tabular}{|l|l|l|l|l|l|l|l|l|l|l|l|l|l|l|}
\hline
$\deg(f)$ &1 & 2 & 3 & 4 & 5 & 6 & 7 & 8 & 9 & 10 & 11 & 12 & 13 & 14 \\ \hline
$W(f) \leq$ &0.5 & 1 & 1.5 & 2.0 & 2.5 & 3.0 & 3.5 & 3.9375 & 4.096 & 4.203 & 4.273 & 4.311 & 4.335 & 4.348 \\ \hline

\end{tabular}%
}
\caption{Bounds on $W(f)$ for low degrees, obtained using Lemma 6 and Table \ref{bs table}. }
\label{W table}
\end{table}

\section{Discussion and concluding remarks}\label{other}

While our methods are unlikely to produce the optimal $W^*$, they do suggest a few interesting questions.
\begin{itemize}

\item The proof seems to suggest that block sensitivity limits junta size, and for small $d$, the values of $W(b,d)$ are much lower when $b \ll d^2$ than when $b \sim d^2$. A classical result of Simon \cite{Simon} says that $|R(f)| \leq s(f)4^{s(f)}$. Interestingly, we can obtain a proof of a weaker version of Simon's theorem using only the techniques in this paper and in \cite{CHS}. The idea is to define an analogue of $W$ for sensitivity instead of degree:
$$S(f) := \sum_{i \in R(f)} 2^{-s_i(f)}, \, \, \, \text{ where } s_i(f) := \max_{\{x \,: \,f(x^i) \neq f(x)\}} (s_x(f)+s_{x^i}(f)). $$
\textbf{Claim:} \textit{Fact \ref{fact} holds with $\deg_i(f)$ replaced by $s_i(f)$.} 

\textit{Proof:} Since $s_i(f) \geq \max\{s_i(f_0), s_i(f_1)\}$ is clear from the definitions, we can assume that $f_0$ does not depend on $x_i$.  Without loss of generality suppose $j = 1$ and that $y$ has $f(1, y) = 1 \neq f(1, y^i)$ and $s_i(f_1) = s_y(f_1) + s_{y^i}(f_1)$. First suppose $f_0(y) = 0$. Since $f(1, y) = 1$, this means $f$ is also sensitive to $j$ at input $(1, y)$, and so $s_i(f) \geq 1 + s_i(f_1)$. If $f_0(y) = 1$, then $f_0(y^i)$ is also 1 because $f_0$ does not depend on $x_i$. But then $f$ is sensitive to $j$ at input $(1,y^i)$, and so either way $s_i(f) \geq 1 + s_i(f_1)$, which implies the claim. \qed

From here we arrive at the analogue of (\ref{summed_ineq}), and we can proceed in a number of ways. As in the proof of Lemma \ref{bd_recursive}, we can restrict maximum degree monomials until we run out of block sensitivity, yielding $|R(f)| \leq \deg(f)\cdot\bls(f)\cdot 4^{s(f)}$. 

In any case, it seems reasonable to conjecture a Nisan-Szegedy theorem for block sensitivity, namely that any boolean function $f$ is a $\poly(\bls(f))\cdot2^{\bls(f)}$-junta.

 \item More generally, it would be interesting to characterize certain \emph{ternary} relationships between complexity measures. Many of the examples we know which achieve optimal or best-known separations between two measures tend to have the property that a third measure is equal or very close to one of the other two. (For example, the best known gap of the form $\bls(f) \ll \deg(f)$ is attained by a Tribes function on $n$ variables with $\deg(f) = n$ and $\bls(f) = s(f) = C(f) = \sqrt{n}$.) Moreover, these examples almost always have $|R(f)| = \poly(\deg(f))$. Meanwhile, the known examples of functions with nearly-optimal junta size do not exhibit any super-constant separation between the measures $\deg(f), \bls(f), s(f)$ and $D(f)$. (It is possible to hybridize small, well-separated functions with large juntas, but the separations and the junta size both suffer some loss.)

 \item Finally, Table \ref{bs table} suggsts that $\bls(f) \leq c_0\deg(f)^2$, for $c_0 \approx 0.59$. If you enjoyed reading this paper, perhaps you would enjoy trying to compute the optimal value of $c_0$. One consequence of showing that $\bls < \deg^2$ is that any separation between $\bls$ and degree proven by simply tensorizing a single example would necessarily (in the absence of more sophisticated arguments) look like $\bls(f) \geq \deg(f)^{2 - \epsilon}$, for some $\epsilon > 0$.
\end{itemize}

\end{document}